\theoremstyle{definition}
\newtheorem{theorem}{Theorem}
\newtheorem{proposition}[theorem]{Proposition}
\def \dax #1 {{\color{red} [\textbf{Dax:} #1] }}
\def \blue #1 {{\color{blue}  #1 }}
\definecolor{burgundy}{rgb}{0.5, 0.0, 0.13}
\newcommand{\ketbra}[2]{\mbox{$|#1\rangle\langle #2|$}}
\newcommand{\myTriangleBlack}[4] % #1 = x coord, #2 = y coord, #3 = rotation angle, #4 = label
{\draw [rotate around={#3: (#1+.35, #2)}, fill=black] (#1, #2) +(0,.4) -- +(0,-.4) -- +(.7,0) -- +(0,.4) (#1+.25, #2) node[white] {#4}}
\begin{document}

\preprint{arXiv:1808.00460}

\title{Entanglement Scaling in Quantum Advantage Benchmarks}% Force line breaks with \\
%\thanks{Data and source code used to make Figure \ref{fig:gate-count} available online.}%
% supreme adversarial advantage 

\author{Jacob D.~Biamonte}
\email{j.biamonte@skoltech.ru}
 \homepage{http://quantum.skoltech.ru}
\affiliation{
 Deep Quantum Laboratory, Skolkovo Institute of Science and Technology, 3 Nobel Street, Moscow, Russia 121205
}

\author{Mauro E.S.~Morales}
\email{mauricioenrique.moralessoler@skoltech.ru}
\affiliation{
 Deep Quantum Laboratory, Skolkovo Institute of Science and Technology, 3 Nobel Street, Moscow, Russia 121205
}

\author{Dax Enshan Koh}
\email{daxkoh@mit.edu} 
\affiliation{%
Department of Mathematics, Massachusetts Institute of Technology, Cambridge, Massachusetts 02139, USA}%
\affiliation{
 Deep Quantum Laboratory, Skolkovo Institute of Science and Technology, 3 Nobel Street, Moscow, Russia 121205
}

%\date{\today}% It is always \today, today,
             %  but any date may be explicitly specified
\begin{abstract}
A contemporary technological milestone is to build a quantum device performing a computational task beyond the capability of any classical computer, an achievement known as quantum adversarial advantage. In what ways can the entanglement realized in such a demonstration be quantified?  Inspired by the area law of tensor networks, we derive an upper bound for the minimum random circuit depth needed to generate the maximal bipartite entanglement correlations between all problem variables (qubits).  This bound is (i) lattice geometry dependent and (ii) makes explicit a nuance implicit in other proposals with  physical consequence.  The hardware itself should be able to support super-logarithmic ebits of entanglement across some poly($n$) number of qubit-bipartitions, otherwise the quantum state itself will not possess volumetric entanglement scaling and full-lattice-range correlations.  Hence, as we present a connection between quantum advantage protocols and quantum entanglement, the entanglement implicitly generated by such protocols can be tested separately to further ascertain the validity of any quantum advantage claim. 
\end{abstract}
%\pacs{03.67.−a \and 03.67.Ac \and 03.67.Lx \and 02.20.Sv \and 02.30.Xx \and 02.30.Yy \and 03.67.Mn}

\maketitle

%\tableofcontents

%\section{Introduction}

Rapid experimental advancements have spawned an international race towards the first experimental {\it quantum adversarial advantage} demonstration---in which a quantum computer outperforms a classical one at some task \cite{ bremner2010classical, aaronson2011computational,preskill2012quantum, farhi2016quantum,2016arXiv160800263B, harrow2017quantum, bremner2017achieving, 2017arXiv171205384B,2017arXiv171005867P,Aaronson:2017:CFQ:3135595.3135617, 2018arXiv180404797L,2018arXiv180501450C, bouland_et_al:LIPIcs:2018:8867, bermejo2018architectures, bouland2018quantum, dalzell2018many,2018arXiv190710749, barends2016}. There is likewise  interest in understanding the effectiveness of low-depth quantum circuits for e.g.~machine learning \cite{2017Natur.549..195B} and quantum simulation \cite{bermejo2018architectures}.  Adding to this theory, we propose a quantification of the maximal entanglement (manifest in correlations between problem variables) that a given quantum computation can support \cite{2013JPhA...46U5301B}.  Quantification of the minimal-sized circuits needed to produce---even in principle---maximally correlated quantum states fills gaps missing in the theory of quantum adversarial advantage and low-depth circuits in general. Indeed, such minimal-depth circuits---as predicted by our theory---seem to be  difficult small quantum circuits to simulate classically and hence might offer quantum advantage for practical problems even outside of random adversarial advantage benchmarks.  

The goal of quantum adversarial advantage is to perform any task that is beyond the capability of any known classical computer. A naive starting point would be to consider the evident memory limitations of classical computers. And to create a quantum state exceeding that.  If we consider an ideal quantum state, we must store at most $2^{n+1}\cdot 16$ bytes of information, assuming 32 bit precision.  This upper bound reaches $80$ terabytes (TB) at just less than 43 qubits and $2.2$ petabytes (PB) at just under 47.  Eighty TB and $2.2$ PB are commonly referenced as the maximum memory storage capacity of a rapid supercomputing node and the supercomputer Trinity with the world's largest memory (respectively).  And so quantum adversarial advantage might already be possible with $\geq$ 47 qubits (strong simulation).  The problem is that to create states requiring $2^{n+1}$ independent degrees of freedom would require $O(\exp[n])$ gates, well beyond the coherence time of any device outside of the fault-tolerance threshold. And so we must search for another adversarial advantage protocol, requiring lower-depth circuits.

\paragraph*{Background.}
Broadly speaking, the leading proposals for quantum adversarial advantage can be divided into two categories: (i) those that provide strong complexity-theoretic evidence of classical intractability (based, for example, on the non-collapse of the polynomial hierarchy) and (ii) those that promise to be imminent candidates for experimental realization. Examples in the former category include sampling from (a) boson sampling circuits \cite{aaronson2011computational}, (b) IQP circuits \cite{bremner2010classical}, and (c) DQC1 circuits \cite{PhysRevLett.120.200502}. A leading example in the latter category is the problem of sampling from random quantum circuits. 

The existence of an efficient classical algorithm which can simulate random quantum circuits seems unlikely. In particular, it would imply the violation of the Quantum Threshold Assumption (QUATH) \cite{Aaronson:2017:CFQ:3135595.3135617}.  However, this says nothing of the number of qubits and the depths of the circuits required to demonstrate this separation between quantum and classical computational devices.  To address this, all arguments to-date have extrapolated---based on numerics or counting resources---where the classical intractability crossover point will occur.  Our theory does not avoid this per se, it simply provides upper bounds of the entanglement generated by a random circuit.  The same amount of entanglement generated by a random adversarial advantage circuit can be tested external to the adversarial advantage demonstration. We bound the required gate depth to maximally entangle any bipartition on a quantum processor, prior numerical simulations of the benchmark proposed in \cite{2016arXiv160800263B} are below this bound.  It is important to note that the gate depths required to maximally entangle all bipartitions is not a sufficient condition to achieve quantum adversarial advantage. Our methods to perform analysis on the entanglement are based on ideas from tensor networks. Alternative approaches have previously studied the role of entanglement in quantum algorithms such as Shor's algorithm \cite{orus2004,kendon2006}.

An observation of central importance is that existing quantum processors rely on qubits where the restriction is that these qubits interact on the 2D planar lattice.  In the long-term, the specific layout will be of less consequence.  However, for low-depth circuits a subtle implication is that the lattice embodies a small-world property, in which long-range correlations must be induced as a sequence of nearest neighbor operations.  Indeed, the Hilbert space describing the quantum processor can be entirely induced by a tensor network \cite{2015JSP...160.1389B} with the same underlying grid-geometry of the Hamiltonian governing the quantum processor itself.  Our bounds are formulated in this setting and are generally applicable across all current quantum adversarial advantage protocols.  

\paragraph*{Results.}
We consider a quantum processor with a geometry represented as a graph $G=(V,E)$. Nodes $v\in V$ are qubits and edges $e\in E$ are couplings which can be adjusted to create interactions between qubits (two body gates). At any point in the calculation, the state of the system is described by the wave function corresponding to the qubits of the quantum processor. This naturally gives rise to a projected entangled pair state (PEPS) tensor network \cite{orus-2014}. 

We will consider quantum circuits that are formed by acting on the interaction graphs appearing in Figure \ref{fig:grid-peps}. 

A quantum process is hence a space-time diagram codified by a triple of natural numbers $l \times m \times g$ where we assume $n = l\cdot m$ qubits enumerate the nodes of a rectangular lattice $Q$ and $g$ is the gate-depth of circuits acting on $Q$.  As will be seen, the variation over all circuits of depth at most $g$ acting on an $l \times m$ qubit grid lifts to a state-space.  Here the edges of $Q$ connect $2(\sqrt{n}-1)\sqrt{n}$ horizontal (otherwise vertical) nearest neighbor pairs---where $\sqrt{n}$ will be deformed later as to deviate from a perfect square and hence capture the rectangular structure of certain contemporary quantum information processors (see \ref{appendix:grid}).  We will fix a canonical basis found from iterating all possible binary values of the qubits positioned on the nodes of $Q$, which is given by the complex linear extension of the domain $\{0,1\}^l\times \{0,1\}^m$.  

This assignment lifts the internal legs of $Q$ to linear operators between external (qubit) nodes and hence fully defines our state-space.  Indeed, the grid structure induces a  dichotomy between tensors of (i) valence (3,1) and (ii) valence (4,1) where the first is of type $\mathbb{C}_\chi^{\otimes 3} \rightarrow \mathbb{C}_2$ and the second is $\mathbb{C}_\chi^{\otimes 4} \rightarrow \mathbb{C}_2$.  The parameter $\chi$ will be defined later as the internal bond dimension.  We note that the minimum edge cut bipartitioning $n$ qubits into two halves is $mincut(Q) = \sqrt{n}$, which will become a quantity of significance. A graphical example for the induced PEPS from a grid is shown in Fig. \ref{fig:grid-peps}.

Rank is the Schmidt number (the number of non-zero singular values) across any of the  bipartitions into $\lceil n/2 \rceil$ qubits on a grid.  Rank provides an upper-bound on the bipartite entanglement that a quantum state can support---as will be seen, a rank-$k$ state has at most $\log_2(k)$ ebits of entanglement. This provides an entanglement coarse-graining which we use to quantify circuits.  

An ebit is a unit of entanglement contained in a maximally entangled two-qubit (Bell) state. A quantum state with $q$ ebits of entanglement (quantified by any entanglement measure) contains the same amount of entanglement (in that measure) as $q$ Bell states. If a task requires $r$ ebits, it can be done with $r$ or more Bell states, but not with fewer.  Maximally entangled states in $\mathbb{C}^d\otimes \mathbb{C}^d$ have $\log_2(d)$ ebits of entanglement.  The question is then to upper bound the maximum amount of bipartite entanglement a given quantum computation can generate, turning to the aforementioned entanglement coarse-graining to classify quantum algorithms in terms of both the circuit depth, as well as the maximum ebits possible.  For low-depth circuits, these arguments are surprisingly relevant.  

To understand this, we note that the maximum number of ebits generated by a fully entangling two-qubit gate acting on a pair of qubits is never more than a single ebit.   We then consider that the maximum qubit partition with respect to ebits is into two (ideally) equal halves, which is never more than $\lceil n/2 \rceil$. We then arrive at the general result that a $g$-depth quantum circuit on $n$ qubits never applies more than $\min\{\lceil n/2 \rceil, g\}$ ebits of entanglement.  This in turn (see \ref{appendix:lb}) puts a lower-bound of $\log_2 \chi = \sqrt{n}/2$ on the two-qubit gate-depth to potentially drive a system into a state supporting the maximum possible ebits of entanglement. However, the grid structure requires the two-qubit gates acting on each qubit to be stacked, immediately arriving at $\sim \sqrt{4n}$ as the lower-bound for a circuit to even in principle generate $\lceil n/2 \rceil$ ebits of entanglement.  This lower bound is just below the gate-depths of interest which were successfully simulated in the literature (see Figure \ref{fig:gate-count} and the Discussion).  Under our coarse grained definition, we don't increase entanglement by the addition of local gates. Following the benchmark of \cite{2016arXiv160800263B}, local gates are added before and after each two qubit gate, multiplying the gate depth by a factor of two and the possible patterns of two qubit gates are eight which to take in consideration we multiply the gate depth by two, yielding $\sim 8\sqrt{n}$. This number bounds the gate depth of the circuit following the protocol in \cite{2016arXiv160800263B} to maximally entangle any bipartition in the grid. This bound is shown in Figure \ref{fig:gate-count}.  For further generalizations see \ref{appendix:lb}.

\begin{figure*}
\subfloat[A quantum processor of $3\times 3$ qubits]{\label{fig:grid}%
  \includegraphics[width=0.3\textwidth]{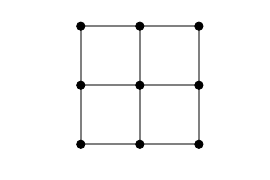}%
}
\subfloat[PEPS induced by grid in Fig. \ref{fig:grid}]{\label{fig:peps}%
  \includegraphics[width=0.3\textwidth]{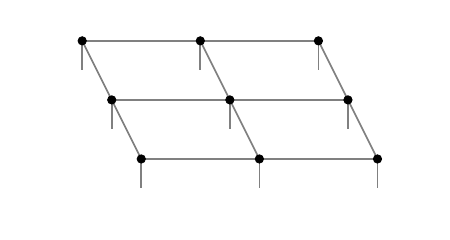}%
}
  \caption{Example of a grid quantum processor (left) and the induced projected entangled pair state (right). (a) Grid for a quantum processor of nine qubits. Each node is represents a qubit and each edge a tunable coupling between the qubits. (b) PEPS induced by grid in Fig.~\ref{fig:grid}. Each node with an open wire is a tensor, the open wires represent physical degrees of freedom in the tensor network and the internal edges represent internal bonds of some fixed dimension. }\label{fig:grid-peps}
\end{figure*}

\paragraph*{Discussion.}
Figure \ref{fig:gate-count} compares our bound to quantum adversarial advantage predictions. Data-points included follow the prescription of quantum circuit simulation by Google \cite{2016arXiv160800263B}. The gate set used in this prescription comprises: $\mathrm{H}$, $\sqrt{\mathrm{X}}$, $\sqrt{\mathrm{Y}}$, $\mathrm{T}$, $\mathrm{CZ}$. Gate definitions are given in Table \ref{table:gates}.

While some of these simulations (e.g.~those done in Sunway TaihuLight supercomputer \cite{2018arXiv180404797L} for a depth $39$ circuit) involve the calculation of the amplitudes of all output bitstrings (all $2^{46}$ bit strings, in the case of Sunway TaihuLight), others such as Alibaba \cite{2017arXiv171005867P} or Sunway TaihuLight simulation for a depth $55$ circuit involve only the calculation of a single amplitude. The data points were obtained from different simulations done recently \cite{2016arXiv160800263B, 2017arXiv171205384B, 2018arXiv180501450C, 2018arXiv180404797L, 2017arXiv171005867P, 2018arXiv180206952C, Haner:2017:PSQ:3126908.3126947}. 

It is interesting to note that the reported numerical simulations fall below the $n/2$ ebit bound (pink online) depicted in Figure \ref{fig:gate-count}. Such circuits are considered to be difficult low-depth circuits to simulate classically.  

We have also included a heat map with an estimation for the running time based on state-of-art algorithms from Alibaba \cite{2017arXiv171005867P}. To estimate the running time, we made use of the following upper bound by 
Markov and Shi \cite{markov2008simulating}: 
any $\alpha$-local interacting quantum circuit of size $M$ and depth $g$ can be strongly simulated in time $t(M,g) = 10^{-17} \cdot M^{O(1)} \exp[O(\alpha g)]$, where a factor of $10^{-17}$ has been included so that the running time of the simulation is in units of seconds. For this factor, we assumed that a classical computer is capable of doing $10^{17}$ flops. In the case of our tensor network $G$ representing a $\sqrt{n} \times \sqrt{n}$ grid, $\alpha= \sqrt{n}$, since the quantum circuit that $G$ represents is $\sqrt{n}$-local interacting. We also estimate the number of total gates naively as the number of couplers in the grid multiplied by the depth gate $M(n,g) =  2(\sqrt{n}-1)\sqrt{n}g$.  Hence, we consider the equation 
\begin{equation} \label{eq:markov_fit}
\begin{split}
 t(n,g) &= 10^{-17} \cdot M(n,g)^{a_1} 2^{a_2 g \sqrt{n}} \\
       &= 10^{-17} \cdot [2(\sqrt{n}-1)\sqrt{n}g]^{a_1} 2^{a_2 g \sqrt{n}}
\end{split}
\end{equation}
and fit it to the numerical results of Alibaba \cite{2017arXiv171005867P} (where it has been taken into consideration the fact that only simulations for one amplitude were realized).
For our fit, we obtained the parameters $a_1 = 4.36063901$ and $a_2 = 0.04315488$. With this fit we are able to give an estimation for the gate depth that can be simulated in $1$ month, $1$ year, $10$ years and $100$ years. An important remark to make here is that Alibaba simulations only calculate $1$ amplitude of an exponential number of possible strings. The algorithm is a modification from Boixo et al.~\cite{2017arXiv171205384B} and based on treewidth to measure contraction complexity  as shown by Markov and Shi. Thus, this estimation should be considered as an approximation. Lastly, we include a pair of vertical lines corresponding to the quantum computers built by IBM and Google with 50 and 72 qubits, respectively. The estimations for achievable gate depths in classical computers for a given threshold are shown in Table \ref{table:runtimes}.

While preparing this manuscript, work by Markov et al.~appeared \cite{2018arXiv190710749} where the prescription on how gates are applied has been changed. One of these changes is the inclusion of the $\mathrm{iSwap}$ gate. They estimate that the gate depth to be simulated in a given runtime is about half for the state-of-art algorithms in this new benchmark. We don't include simulations of this latest work in Figure \ref{fig:gate-count}. Considering this, we show how the estimations are modified in Table \ref{table:runtimes_modbench}. 

In conclusion, we observe a nonlinear tradeoff between the number of qubits and gate depth, with the fleeting resource (with exponential dependency) being the gate depth. As is predicted, we remark that quantum adversarial advantage demonstrations (assuming completely random circuit families) should involve circuits of depth at least 50---if not more---for 80 to 150 qubits. Circuits of such depth would be inside the area in pink in Figure~\ref{fig:gate-count} that we derived in the study. We hence provide some handle to understand the physical entanglement that such adversarial advantage protocols aim generate.

\begin{acknowledgments}
We thank Igor Markov and Mark Saffman for useful comments and Sergio Boixo for insightful discussion regarding this work. Data and source code used to make Figure \ref{fig:gate-count} available online \cite{github}. 
\end{acknowledgments} 

\begin{figure*}
    \centering
    \includegraphics[width=\textwidth]{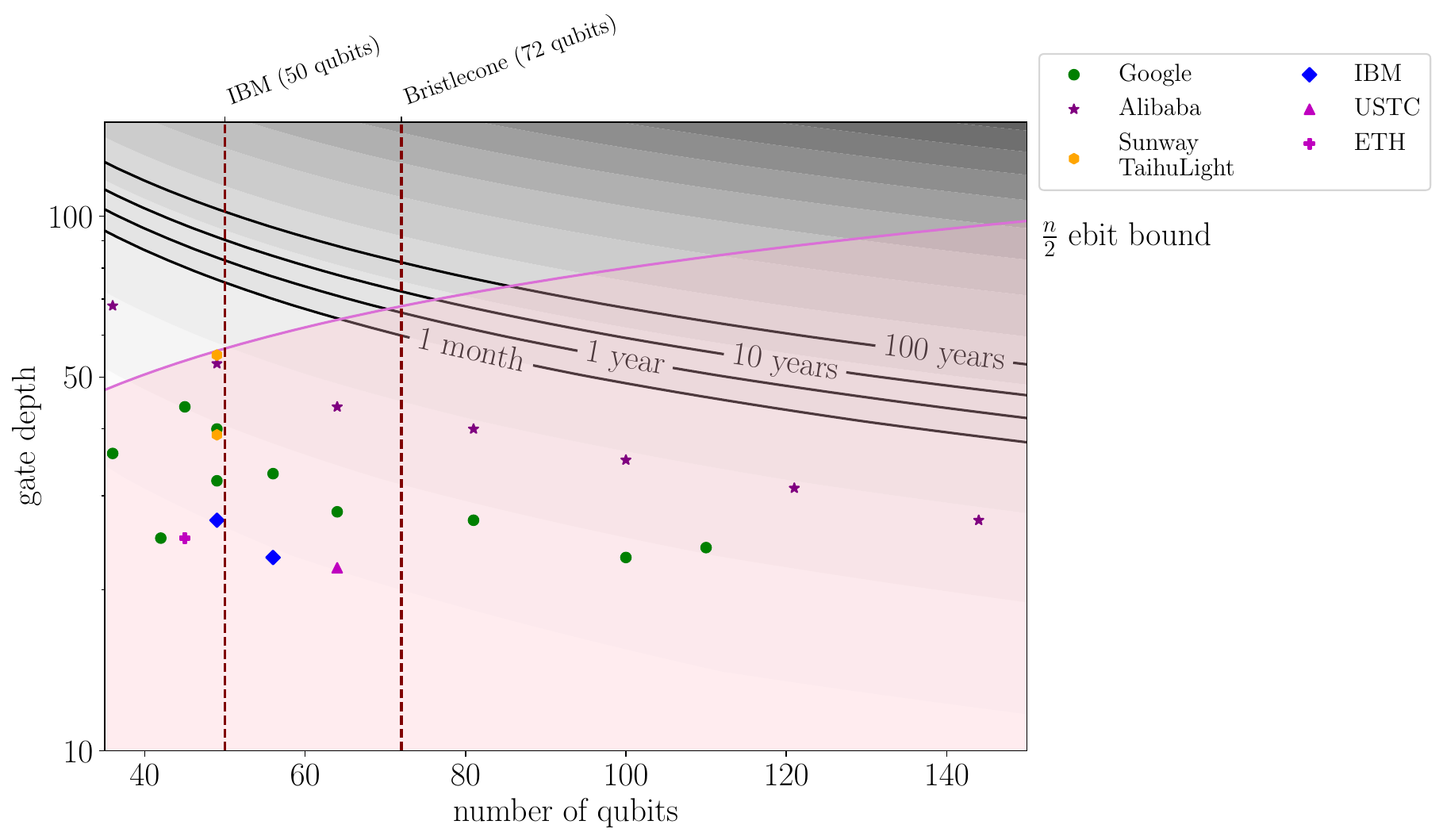} 
    \caption{Comparison of findings. Qubits versus gate depths superimposed on runtimes. In pink (online), area below the gate depth required to achieve maximallly entangled states for every bipartition containing existing numerical data (depicted as $n/2$ ebit bound). %Gate depth estimation to maximize entanglement across a minimum cut bipartition with equal number of qubits as a function of the total number of qubits in the lattice. This gate depth lower bound (blue curve) is multiplied by factors $4$ to account for local gates given the prescription from Google \cite{2016arXiv160800263B} with the gate set $\mathrm{CZ}$, $\mathrm T$, $\sqrt{\mathrm X}$, $\sqrt{\mathrm Y}$, $\mathrm H$. Existing numerical data fall above the analytical lower bound yet still under that bound with the addition of local single qubit gates.  
    [Data points from Google \cite{2016arXiv160800263B, 2017arXiv171205384B}, Alibaba \cite{2018arXiv180501450C}, Sunway TaihuLight \cite{2018arXiv180404797L}, IBM \cite{2017arXiv171005867P}, USTC \cite{2018arXiv180206952C}, ETH \cite{Haner:2017:PSQ:3126908.3126947}].}
    \label{fig:gate-count}
\end{figure*}

%\onecolumn
\bibliographystyle{naturemag}  
\onecolumngrid
\bibliography{sample}

\begin{appendices}

\section{Lower bounds for general tensor networks}\label{appendix:lb}

In the main text, we derived bounds for qubits positioned on a grid. In this section, we generalize the result to general graphs. Consider a tensor network $G$ with $n$ vertices and $e$ edges. For convenience, we assume that $n$ is even. Assume that the bond dimension associated with each edge is equal to $\chi$. Let $C$ be a cut of $G$ that partitions the vertices of $G$ into two disjoint subsets, each of equal size $n/2$. Let $f$ be the number of edges that have exactly one endpoint in each of the two parts of $C$.

\begin{proposition}
\label{prop:generalGraph}
 If $G$ supports the maximum number of ebits across $C$, then
    $$\chi \geq 2^{\frac{n}{2f}}. $$
 The total number of gates $\#_g$ needed to produce the state represented by $G$ satisfies
    $$\#_g \geq \frac{ne}{2f}. $$
\end{proposition}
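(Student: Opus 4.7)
The plan is to prove the two inequalities by following the same conceptual chain already used in the main text for the grid, but now phrased for an arbitrary graph $G$. First I would recall the tensor-network fact that when one contracts the subnetwork on each side of $C$ into a single tensor, the resulting bipartite state admits a Schmidt decomposition whose rank is at most the product of the bond dimensions of the edges crossing the cut, i.e.\ $\chi^{f}$. Combined with the bound that a rank-$k$ state carries at most $\log_{2}k$ ebits across the partition, this gives an entanglement ceiling of $f\log_{2}\chi$ ebits across $C$.

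Next I would invoke the maximum-ebit observation: since $C$ is balanced with $n/2$ qubits on each side, the bipartite entanglement across $C$ is capped by $n/2$ ebits, and by hypothesis this maximum is attained. Equating the hypothesis with the ceiling yields
\begin{equation}
f\log_{2}\chi \;\geq\; \frac{n}{2},
\end{equation}
and rearranging immediately gives $\chi \geq 2^{n/(2f)}$, which is the first claim.

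For the gate-count inequality, I would then argue as in the grid derivation that each two-qubit gate is localized on a single edge of $G$ and can at most double the bond dimension on that edge (since a two-qubit unitary has Schmidt rank at most $2$ when viewed as an operator across the cut it acts on). Consequently, to raise the bond dimension on a given edge from $1$ to $\chi$ requires at least $\log_{2}\chi$ gates acting through that edge. Summing over all $e$ edges of $G$ and substituting the lower bound on $\chi$ obtained above gives
\begin{equation}
\#_{g} \;\geq\; e\log_{2}\chi \;\geq\; \frac{ne}{2f},
\end{equation}
which is the second claim.

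The main obstacle I anticipate is the gate-count step: one has to justify carefully that the gates can be assigned to edges in such a way that a gate contributes only to the bond dimension of its own edge, and that the per-edge $\log_{2}\chi$ lower bound truly adds rather than overlaps across different edges. This amounts to showing that gates acting on disjoint edges affect disjoint bonds in the induced tensor network, so their contributions to the total gate count do not double-count. The Schmidt-rank and ebit bookkeeping in the first part is standard, so the delicate accounting is concentrated in this per-edge assignment.
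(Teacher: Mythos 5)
Your proposal is correct and follows essentially the same route as the paper: the ebit ceiling $f\log_2\chi$ across the cut versus the $n/2$ maximum gives the bound on $\chi$, and the per-edge count of $\log_2\chi$ gates multiplied by $e$ gives the gate bound. Your extra care in justifying the per-edge gate assignment (each two-qubit gate at most doubling one bond's dimension) only makes explicit what the paper's proof asserts directly.
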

\begin{proof}
The maximum number of ebits across $C$ is $n/2$. Hence,
$$   2^{n/2} \leq \chi^f, $$
which implies that 
$$ \chi \geq 2^{\frac{n}{2f}}.$$

The total number of gates associated with each edge is $\log_2 \chi \geq \frac{n}{2f}$. Multiplying this quantity with the number of edges $e$, the total number of gates needed to produce the state represented by $G$ satisfies
\begin{equation}\label{eq:total-gates}
    \#_g = e \log_2 \chi \geq \frac{ne}{2f}. 
\end{equation}

\end{proof} 

In the case where $G$ is a $\sqrt n\times \sqrt n$ grid, we have $e=2\sqrt n (\sqrt n - 1)$. If $C$ is a min-cut of $G$, then $f = \sqrt n$. Substituting these values of $e$ and $f$ into Proposition \ref{prop:generalGraph} allows us to recover the equations obtained in the main text:
    $$\chi \geq 2^{\sqrt n/2}, $$
and
    $$\#_g \geq n(\sqrt n -1). $$

Several other geometries can be analyzed as well. Consider a quantum processor with a line and a ring topology such as in Figure \ref{fig:line} and \ref{fig:ring}. Following Eq.~\eqref{eq:total-gates}, the total number of gates $\#_g$ to maximally entangle any bipartition in a quantum processor described by a graph $G$  must be at least $\frac{ne}{2f}$ where $n$ is the number of vertices, $e$ is the total number of edges and $f$ is the number of edges that are cut by the bipartition cutting the most edges in the graph (but keeping the bipartition with equal number of vertices).
For the line we have that given $n$ vertices then $e=n-1$ and $f=1$, on the other hand for the ring topology $e=n$ and $f=2$. We thus conclude that

$$\#_g^{line} \geq\frac{n(n-1)}{2}. $$
$$\#_g^{ring}\geq\frac{n^2}{4}. $$
showing that for achieving a maximally entangled bipartition requires more gates in the line topology than in the ring one for large $n$.

We consider as a last example a topology proposed by Rigetti \cite{rigetti2019} for a 128-qubit processor. The graph consists of octagons such as the one in Figure \ref{fig:ring} which form a grid as in Figure \ref{fig:rigetti}. Call $o$ the number of octagons in the graph, then $e=8o + 2 \times (2\sqrt{o}(\sqrt{o}-1) )$. The first term comes from the fact that each octagon has $8$ edges, the second comes from the fact that there are $(2\sqrt{o}(\sqrt{o}-1) )$ double edges between octagons in the grid of octagons. It is direct the the fewest number of edges that a bipartition can cross is $f=2\sqrt{o}$. Since $o=\frac{n}{8}$, we have that

$$\#_g^{Rigetti} \geq \frac{ne}{2f} = \frac{3}{\sqrt{8}}n\sqrt{n}-n. $$
Which implies that the required number of gates applied in the Rigetti topology is greater than in the grid case as it would be expected.

\begin{figure}
\subfloat[Line topology]{\includegraphics[width=.3\columnwidth]{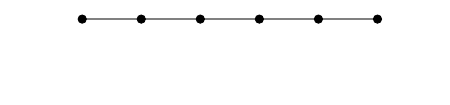}\label{fig:line} }
\qquad
\subfloat[Ring topology]{\includegraphics[width=.3\columnwidth]{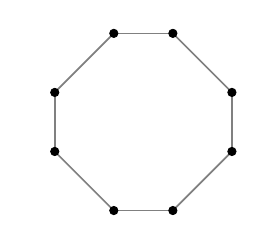}\label{fig:ring} }%
\qquad
\subfloat[Rigetti topology]{\includegraphics[width=.3\columnwidth]{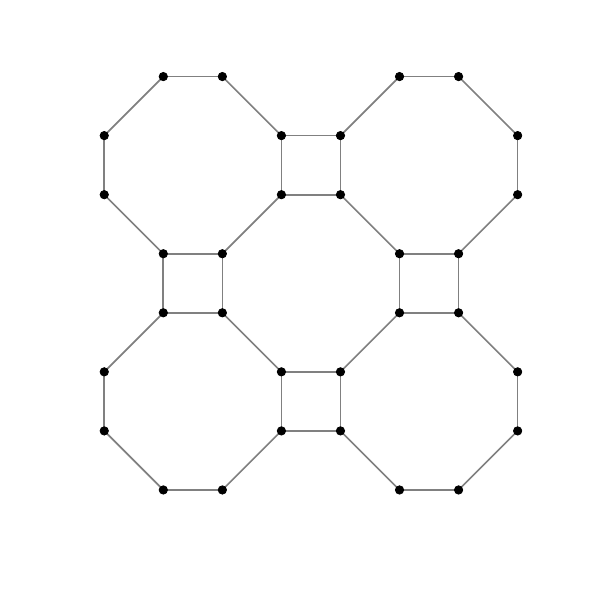}\label{fig:rigetti} }
\caption{Examples of graph topologies for a quantum processor. (a) A line with $n=6$ qubits. (b) A ring with $n=8$ qubits. (c) Topology of a quantum processor from architecture announced by Rigetti \cite{rigetti2019}, we consider here a grid composed of octagonal rings connected as shown in the figure.}%
\label{fig:line-ring}
\end{figure}

\section{Lower bounds for deformed grid}\label{appendix:grid}
Let $H$ be a $(\sqrt n-k) \times \left(\frac n{\sqrt n - k}\right)$ grid. In other words, $H$ is a deformation of the 
$\sqrt n\times \sqrt n$ grid $G$, wherein the number of qubits along one edge is reduced by $k$ and the total number of qubits is kept constant at $n$.
Assume that $\sqrt n > k \geq 0$, that $n$ is a perfect square, and that $\sqrt{n}-k$ divides $n$, to ensure that the number of qubits in each row and column is an integer.

Note that the length of the longer edge can be written as
$$
    \frac n{\sqrt n -k} = \sqrt n + k + \frac{k^2}{\sqrt n} + O\left(\frac{k^3}n\right).
$$

Hence, for $H$, the number of vertices is $n$, and the number of edges is
$$
    e = 2 \sqrt n(\sqrt n -1) - \frac {k^2}{\sqrt n} \left(1-\frac k{\sqrt n}\right)^{-1}.
$$

If $C$ is the min-cut of $G$, then $f = \sqrt n-k$. Applying Proposition \ref{prop:generalGraph} to the graph $H$ and the cut $C$ then gives
$$
    \log \chi \geq \frac{\sqrt n}2 \left( 1-\frac k {\sqrt n}\right)^{-1},
$$
and
$$
\#_g \geq n(\sqrt{n} -1)\left( 1 - \frac{k}{\sqrt n}\right)^{-1} - \frac{k^2}{2}\left( 1 - \frac{k}{\sqrt n}\right)^{-2}.
$$

Since $\frac{k}{\sqrt n} < 1$, we could expand $\left( 1 - \frac{k}{\sqrt n}\right)^{-2}$ and $\left( 1 - \frac{k}{\sqrt n}\right)^{-2}$ as Taylor series. This gives
$$
\log \chi \geq \frac{\sqrt n}2  \sum_{s=0}^\infty \left(\frac k{\sqrt n}\right)^s 
= \frac{ \sqrt n}2 \left[1+ \frac k{\sqrt n} + \frac{k^2}n + O\left(\left(\frac{k}{\sqrt n}\right)^3\right)\right],
$$
and
$$
\#_g \geq \sum_{s=0}^\infty \left[ n(\sqrt n-1) - \frac{k^2} 2 (s+1) \right] \left(\frac k{\sqrt n}\right)^s.
$$

% \section{Notes for plot}\label{appendix:notes-plot}
% Here we include a description of how the running times in Figure \ref{fig:gate-count} were obtained. Starting from \eqref{eq:markov_fit}, we can fit the runtimes to the ones reported by Alibaba \cite{2017arXiv171005867P}. From the $7$ data points reported by Alibaba, the point corresponding to $144$ qubits was not considered for the fit because of its significant lower runtime compared to the other data points. Note that the runtime in \eqref{eq:markov_fit} is in units of seconds. The parameters obtained for this fit are $a_1 = 4.36063901$ and $a_2 = 0.04315488$. 
% We used this fit to estimate the achievable gate depths given different runtimes for grids of $50$ and $72$ qubits. The gate depths obtained are shown in Table \ref{table:runtimes}. 

% While preparing this manuscript, work from Google appeared \cite{2018arXiv190710749} where the prescription on how gates are applied has been changed. One of these changes is the inclusion of the $\mathrm{iSwap}$ gate. They estimate that the gate depth to be simulated in a given runtime is about half for the state-of-art algorithms in this new benchmark. Considering this changes and the estimation given by Google, then the gate depths are modified as shown in Table \ref{table:runtimes_modbench}.
 \newpage

\FloatBarrier

\setlength{\tabcolsep}{20pt}
\begin{table}[H]
\centering
\begin{tabular}{ c c  } 
 \hline
 \hline
$\mathrm{H}$ & $\frac{1}{\sqrt{2}}\sum_{k,l = 0}^1 (-1)^{kl} \ketbra{k}{l}$ \\
$\sqrt{\mathrm{X}}$ &  $\frac{1}{2}\big[(1+i) \ketbra{0}{0} + (1-i)\ketbra{0}{1} + (1-i)\ketbra{1}{0} + (1+i) \ketbra{1}{1} \big]$ \\
$\sqrt{\mathrm{Y}}$ &   $\frac{1}{2}\big[(1+i) \ketbra{0}{0} + (-1-i)\ketbra{0}{1} + (1+i)\ketbra{1}{0} + (1+i) \ketbra{1}{1} \big]$ \\
$\mathrm{T}$ & $\sum_{k = 0}^1 \mathrm{e}^{i\frac{\pi}{4}k} \ketbra{k}{k}$ \\
$\mathrm{CZ}$ & $\sum_{k,l=0}^1 (-1)^{kl} \ketbra{kl}{kl} $ \\
 %\hline 
 \hline
 \hline
\end{tabular}
 \caption{Gate set involved in the random circuit sampling benchmark recently proposed \cite{2016arXiv160800263B}.}
 \label{table:gates}
\end{table}

\setlength{\tabcolsep}{20pt}
\begin{table}[H]
\centering
\begin{tabular}{ c c c } 
 \hline
 \hline
 runtime & gate depth for $50$ qubits & gate depth for 72 qubits  \\ 
 \hline
 $1$ month & 75 & 60  \\ 
 %\hline 
 $1$ year & 84 & 67 \\ 
 %\hline
 $10$ years & 93 & 75  \\
 %\hline
 $100$ years & 102 & 82 \\
 \hline
 \hline
\end{tabular}
 \caption{Estimation based on the fit given by \eqref{eq:markov_fit} for the gate depths achievable in given runtime assuming strong classical simulation of one amplitude. For our estimates, we assume that 1 year contains $365.2425 \times 24 \times 60^2$ seconds, and that 1 month is $\tfrac 1{12}$ of a year. The gate depths obtained are rounded up to the nearest integer.}
 \label{table:runtimes}
\end{table}
%\endgroup

\setlength{\tabcolsep}{20pt}
\begin{table}[H]
\centering
\begin{tabular}{ c c c } 
 \hline
 \hline
 runtime & gate depth for $50$ qubits & gate depth for 72 qubits  \\ 
 \hline
 $1$ month & 38 & 30  \\ 
 %\hline 
 $1$ year & 42 & 33 \\ 
 %\hline
 $10$ years & 46 & 38  \\
 %\hline
 $100$ years & 51 & 41 \\
 \hline
 \hline
\end{tabular}
 \caption{Estimation based on the fit given by \eqref{eq:markov_fit} and estimation given by Google \cite{2018arXiv190710749} considering the change of benchmark for the gate depths achievable in given runtime assuming strong classical simulation of one amplitude. }
 \label{table:runtimes_modbench}
\end{table}

 \end{appendices}

\end{document}